\documentclass[12pt]{article}

\usepackage[a4paper, top=1.15in, bottom=1.15in,
            left=1.3in, right=1.3in]{geometry}
\usepackage{amsmath, amssymb, amsthm, mathtools}
\usepackage{booktabs, array}
\usepackage[protrusion=true, expansion=false]{microtype}
\usepackage[T1]{fontenc}
\usepackage[round, sort, compress]{natbib}
\usepackage{xcolor}
\definecolor{linkblue}{rgb}{0.04,0.22,0.56}
\usepackage[colorlinks=true,
            linkcolor=linkblue,
            citecolor=linkblue,
            urlcolor=linkblue]{hyperref}
\usepackage{enumitem}
\usepackage{titlesec}
\titleformat{\paragraph}[runin]{\normalfont\bfseries}{}{0pt}{}[.]

\theoremstyle{plain}
\newtheorem{theorem}{Theorem}[section]
\newtheorem{proposition}[theorem]{Proposition}

\theoremstyle{definition}
\newtheorem{definition}[theorem]{Definition}
\newtheorem{remark}[theorem]{Remark}

\DeclareMathOperator{\VR}{VR}
\DeclareMathOperator{\Var}{Var}

\newcommand{\Z}{\mathbb{Z}}
\newcommand{\calC}{\mathcal{C}}
\newcommand{\hatrho}{\hat{\rho}}
\newcommand{\hatdelta}{\hat{\delta}}
\newcommand{\bfone}{\mathbf{1}}
\newcommand{\calN}{\mathcal{N}}

\title{\textbf{The Bounce Has No Direction}\\[8pt]
\large Sign, Magnitude, and the Microstructure\\
of Equity Return Predictability\\[6pt]
\normalsize Fourier-Residue Identities, Fej\'{e}r Sums, and Evidence\\
from US Equity and Cross-Asset Markets, 1993--2026}

\author{Victoria Portnaya\\[6pt]
\small Kyiv School of Economics, Kyiv, Ukraine\\[2pt]
\small \texttt{vportnaia@kse.org.ua}}

\date{May 2026}

\begin{document}
\emergencystretch=2.5em
\maketitle

\begin{abstract}
Broad US equity indices exhibit statistically significant daily return
autocorrelation - SPY's lag-1 autocorrelation of $-0.081$ is
$7.4$ standard errors from zero - yet existing tests cannot say
\emph{why}: whether prices reverse because the \emph{direction} of
moves tends to flip, or because \emph{large} moves tend to shrink.
This paper resolves that ambiguity by developing the Fourier-Residue
Identity (FRI), a decomposition of return autocorrelation into a
direction (sign, $k=2$) channel and a magnitude ($k=4$) channel,
each individually testable, neither redundant.
Applied to daily and weekly log returns of six US instruments
(SPY, QQQ, IWM, AAPL, MSFT, GLD) over 1993--2026, and to a
cross-asset panel of 21 instruments across seven classes, the FRI
delivers a sharp microstructure diagnosis.
The lag-1 autocorrelation in SPY is driven \emph{entirely} by
magnitude: the FRI sign test at lag~1 is insignificant ($p=0.11$),
while the full autocorrelation test achieves $p<10^{-12}$.
A large return yesterday predicts a smaller return today regardless
of sign - the fingerprint of bid-ask bounce and non-synchronous
constituent trading, not directional reversal.
At lag~3, however, a significant directional reversal ($p=0.02$)
appears that is invisible to the scalar autocorrelation test - 
evidence of a separate, slower partial-price-adjustment channel.
We also prove the Fej\'{e}r identity $\VR(q) = 1 + 2\calC_q$ (verified
to $<10^{-3}$ on all series), which gives the Lo--MacKinlay
variance-ratio test a spectral interpretation: it tests whether the
normalised power spectrum at zero frequency equals unity under
Fej\'{e}r smoothing.
A subsample persistence diagnostic $R_N = G_{N/2}/G_N$ - where $G_N$ is
the maximum sample autocorrelation over $M$ lags - distinguishes
structural autocorrelation ($R_N \to 1$) from sampling noise
($R_N \to \sqrt{2}$); equity indices all satisfy $R_N \approx 1$,
confirming the autocorrelation is a persistent property of the market
rather than a finite-sample artefact.
The cross-asset panel shows that short-horizon mean reversion is
confined to exchange-traded equity markets and sovereign bonds.
Credit ETFs, commodities, foreign exchange, and cryptocurrency are
statistically indistinguishable from a random walk - consistent with
the FRI prediction: non-synchronous constituent staleness and dealer
inventory smoothing are microstructure features of exchange-traded
basket instruments, absent in continuously traded, OTC-priced, and
decentralised markets.
All estimators are validated by 27 unit tests and a Monte Carlo study
confirming correct $5\%$ size under GARCH (where the homoskedastic
Lo--MacKinlay $z$ over-rejects at $10$--$12\%$) and asymptotic
$\calN(0,1)$ calibration of the heteroskedastic-robust $z^*$.
\end{abstract}

\medskip
\noindent
\textbf{MSC 2020}\ 91G10, 62M10, 62P05, 60G50.\\[2pt]
\noindent
\textbf{Keywords}\ Variance ratio; Fej\'{e}r sum; Fourier-Residue Identity;
market microstructure; bid-ask bounce; non-synchronous trading;
partial price adjustment; sign channel; magnitude channel;
random walk hypothesis; equity returns.

\clearpage
\tableofcontents
\clearpage

%% ============================================================
\section{Introduction}
%% ============================================================

The world's largest and most actively traded financial markets exhibit
systematic patterns in daily returns that appear, on the surface, to
contradict the efficient-market hypothesis.  The S\&P 500 ETF (SPY),
with an average daily volume exceeding \$30 billion and a bid-ask spread
measured in fractions of a cent, has a lag-1 autocorrelation of $-0.081$
that is more than $7$ standard errors below zero.  This is not an anomaly
buried in obscure data: it is a property of the most liquid equity
instrument on Earth, stable across four decades and every market regime
tested.  Yet knowing that SPY's autocorrelation is negative does not
immediately help a practitioner - the obvious directional strategy
(``go long after down days, short after up days'') is, as we shall show,
essentially without statistical warrant.  The predictability is real,
but it hides in the magnitude of returns, not their direction.

Understanding \emph{which dimension} of returns is predictable - direction
or magnitude - matters for three reasons.  For \emph{practitioners}, it
determines which trading strategies can exploit the autocorrelation: a
directional bet requires a signal about the sign of tomorrow's return, while
a volatility strategy (sizing a delta-hedged position based on expected
move size) requires only a signal about magnitude.  For \emph{market
microstructure researchers}, it distinguishes between mechanisms that
generate autocorrelation purely through mechanical price bouncing
(bid-ask spread, constituent staleness) versus those that reflect genuine
incomplete information incorporation.  For \emph{regulators and market
designers}, it matters whether mean reversion is a symptom of structural
market friction or a trading pattern that could be attenuated through
tick-size reform or changes to index construction methodology.

The variance-ratio (VR) test of \citet{LoMacKinlay1988} - which compares
the variance of $q$-period returns to $q$ times the one-period variance - is
the standard instrument for detecting such autocorrelation.  Under the
random-walk null, $\VR(q) = 1$; values below one indicate net mean
reversion and values above one indicate net momentum over the horizon.
The test is transparent and easily computed, but it yields a single number
that masks the structure of autocorrelation.  In particular, the VR cannot
distinguish between three economically distinct mechanisms that all produce
$\VR(q) < 1$:
\begin{enumerate}[label=(\roman*), leftmargin=2.2em, itemsep=4pt]
  \item \textbf{Bid-ask bounce}: transaction prices alternate between bid
  and ask quotes as the market serves successive buy and sell orders,
  creating negative autocorrelation in transaction-price returns even
  when the efficient price follows a random walk \citep{Roll1984}.
  This is a purely \emph{magnitude} effect - the bounce cannot predict
  whether tomorrow's return is positive or negative, only that it tends
  to be smaller.
  \item \textbf{Non-synchronous trading}: index ETFs are priced from the
  last transaction prices of constituent stocks, some of which may not
  have traded for minutes or hours.  Stale prices cause the index to
  underreact to contemporaneous information, inducing apparent negative
  autocorrelation \citep{ScholeswWilliams1977, LoMacKinlay1990}.
  Like the bounce, this is a \emph{magnitude} channel: staleness damps
  the apparent size of moves without creating directional predictability.
  \item \textbf{Partial price adjustment}: market makers and specialist
  dealers update quotes gradually in response to order flow, so that
  full incorporation of information takes multiple periods
  \citep{GlostenMilgrom1985, AmihudfMendelson1987}.  If adjustment
  overshoots, the reversal has a \emph{directional} component - knowing
  the sign of today's large move provides genuine information about
  the sign of the subsequent correction.
\end{enumerate}
Mechanisms (i) and (ii) generate autocorrelation that no directional
trading strategy can exploit; mechanism (iii) potentially can.
A test that separates these channels is therefore of direct practical value.

\paragraph{This paper's contributions}
We develop three interconnected tools and apply them to 33 years of US
equity and cross-asset data.

\textit{First}, we prove (Proposition~\ref{prop:fejer}) that the variance
ratio admits the exact representation
$\VR(q) = 1 + 2\calC_q$, where
$\calC_q = \sum_{m=1}^{q-1}(1-m/q)\hatrho(m)$ is a Fej\'{e}r-kernel
weighted autocorrelation sum.  This gives the Lo--MacKinlay test a spectral
interpretation: $\VR(q)$ tests whether the normalised power spectrum of
returns has zero mass at frequency zero under Fej\'{e}r smoothing.
Mean reversion corresponds to a spectral trough at zero; momentum corresponds
to a spectral peak.  We verify the identity numerically to $<10^{-3}$ on
all series, confirming that the Fej\'{e}r and VR frameworks are
computationally interchangeable.

\textit{Second}, we introduce the Fourier-Residue Identity (FRI), a
decomposition that encodes returns as $k$-ary symbols using the characters
of the cyclic group $\Z/k\Z$ and extracts channel-specific autocorrelations.
For $k=2$ (binary sign coding), the FRI yields a closed-form test for
directional predictability alone, filtering out all magnitude effects
(Proposition~\ref{prop:fri}).  For $k=4$ (magnitude buckets), it captures
within-size persistence independently of direction.  The two channels are
nonnested by construction.

\textit{Third}, we introduce the half-period ratio $R_N = G_{N/2}/G_N$
(Definition~\ref{def:persistence}), where $G_N$ is the maximum sample
autocorrelation over a lag window.  Under IID noise $R_N \to \sqrt{2}$;
under genuine serial dependence $R_N \to 1$ (Proposition~\ref{prop:rn}).
This diagnostic assesses, without a parametric model, whether a detected
VR deviation is structural or a finite-sample artefact - directly answering
the practitioner's question of whether the signal is likely to persist
out of sample.

\paragraph{Principal findings}
At lag~1, the FRI sign test on SPY delivers $p = 0.11$: no directional
predictability.  The full autocorrelation test delivers $p < 10^{-12}$.
The gap between these two numbers is the FRI's central message: knowing
that SPY fell yesterday tells you essentially nothing about whether it
will rise or fall tomorrow.  Knowing \emph{how much} it fell tells you
something about how much it will move.  This pattern matches mechanisms
(i) and (ii) precisely and rules out mechanism (iii) as the dominant
lag-1 driver.  A back-of-envelope Roll calculation reinforces this
conclusion: the implied half-spread from the observed $\hatrho(1) = -0.081$
is approximately $28$ basis points, orders of magnitude larger than SPY's
actual effective spread of 1--3 basis points.  Non-synchronous staleness
across 500 constituent stocks must therefore contribute the bulk of the
observed autocorrelation.

At lag~3, the picture reverses: the FRI sign test is significant ($p=0.02$)
while the scalar ACF is not ($p = 0.50$), pointing to the partial-adjustment
mechanism operating at a three-day delay.  The transition from weekly to
daily data reveals that the short-horizon effect partially attenuates at
the weekly frequency, consistent with bounce and non-synchronous effects
averaging out, while a residual structural component - identified by the
$R_N \approx 1$ diagnostic - persists.

In the cross-asset panel, mean reversion is absent in credit ETFs,
commodities, foreign exchange, and cryptocurrency.  Bitcoin and Ether,
despite extreme volatility, are the closest to a pure random walk among
all 21 instruments - a finding that follows directly from the FRI channel
analysis: cryptocurrency markets trade 24/7, are not composites of
non-synchronously priced constituents, and lack the specialist or primary
dealer structures that create inventory-driven reversal.

\paragraph{Paper organisation}
Section~\ref{sec:theory} develops the mathematical framework.
Section~\ref{sec:micro} reviews microstructure theory and derives
FRI channel predictions for each mechanism.
Section~\ref{sec:inference} presents the inference methods.
Section~\ref{sec:data} describes the data.
Section~\ref{sec:verification} verifies all estimators.
Sections~\ref{sec:main}--\ref{sec:crossasset} present the empirical results.
Section~\ref{sec:discussion} discusses trading implications and limitations.
Section~\ref{sec:conclusion} concludes.

%% ============================================================
\section{Mathematical Framework}
\label{sec:theory}
%% ============================================================

Let $\{r_t\}_{t=1}^{n}$ be a sequence of daily log returns,
$r_t = \log P_t - \log P_{t-1}$, where $P_t$ is the
split-and-dividend-adjusted closing price.
Write $\hat\mu = n^{-1}\sum_t r_t$ for the sample mean.  Sample
autocovariances and autocorrelations are
\begin{equation}
  \label{eq:acf-def}
  \hat\gamma(m) = \frac{1}{n}\sum_{t=m+1}^{n}
  (r_t - \hat\mu)(r_{t-m} - \hat\mu),
  \qquad
  \hatrho(m) = \frac{\hat\gamma(m)}{\hat\gamma(0)}.
\end{equation}

\subsection{Variance ratio and the Fej\'{e}r identity}
\label{sec:fejer}

\begin{definition}[Variance ratio]
\label{def:vr}
For an integer $q \ge 2$, the \emph{variance ratio} at horizon $q$ is
\[
  \VR(q) \;=\;
  \frac{1}{q}\,
  \frac{\Var\!\bigl(\sum_{j=0}^{q-1}r_{t-j}\bigr)}{\Var(r_t)}.
\]
Under the random-walk null, $q$-period variance grows linearly in $q$,
so $\VR(q) = 1$.  Intuitively, if you hold for two days instead of one,
and returns are uncorrelated, the variance should double exactly.
A value $\VR(2) = 0.919$ means the two-day return variance is
$8.1\%$ smaller than expected under random walk - the defining signature
of mean reversion.
\end{definition}

\begin{proposition}[VR--Fej\'{e}r identity]
\label{prop:fejer}
Under second-order stationarity,
\begin{equation}
  \label{eq:fejer}
  \VR(q) \;=\; 1 + 2\sum_{m=1}^{q-1}\Bigl(1-\tfrac{m}{q}\Bigr)\rho(m)
             \;=:\; 1 + 2\calC_q.
\end{equation}
Here $\calC_q$ is the \emph{Fej\'{e}r autocorrelation sum} at horizon $q$,
and the identity holds for sample quantities up to an $O(n^{-1})$
edge correction.
\end{proposition}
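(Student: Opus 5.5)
The population identity comes from expanding the variance of the $q$-period sum. The sample version then comes from tracking the boundary terms that appear when the same expansion is done with the estimators in \eqref{eq:acf-def}.

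\textbf{Population identity.} Under second-order stationarity, write $\gamma(m)=\operatorname{Cov}(r_t,r_{t-m})$, so that $\gamma(-m)=\gamma(m)$. Bilinearity of covariance gives
\[
  \Var\Bigl(\sum_{j=0}^{q-1}r_{t-j}\Bigr)
  =\sum_{i=0}^{q-1}\sum_{j=0}^{q-1}\gamma(i-j).
\]
Group the pairs $(i,j)$ by the difference $m=i-j$. For each $m$ with $|m|\le q-1$ there are exactly $q-|m|$ such pairs, so the double sum equals
\[
  q\gamma(0)+2\sum_{m=1}^{q-1}(q-m)\gamma(m).
\]
Dividing by $q\gamma(0)=q\Var(r_t)$ gives $1+2\sum_{m=1}^{q-1}(1-m/q)\rho(m)$, which is \eqref{eq:fejer}.

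The weights $1-|m|/q$ are exactly the Fourier coefficients of the Fej\'{e}r kernel $F_q$. Hence the identity can equivalently be written as
\[
  \VR(q)=\frac{2\pi}{\gamma(0)}\int F_q(\omega)\,f(\omega)\,d\omega ,
\]
where $f$ is the spectral density. This is where the spectral interpretation comes from, but it is not needed for the algebra.

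\textbf{Sample identity.} Take the Lo--MacKinlay sample variance ratio: the numerator is $(nq)^{-1}\sum_{t=q}^{n}(\sum_{j=0}^{q-1}(r_{t-j}-\hat\mu))^2$ (or its overlapping variant), and the denominator is $\hat\gamma(0)$. Expand the square and interchange sums. The cross products $(r_{t-i}-\hat\mu)(r_{t-j}-\hat\mu)$ with $i-j=m$ reassemble into $n\hat\gamma(m)$, except for at most $q$ boundary summands at each end of the sample. Those summands are missing or counted differently precisely because $\hat\gamma(m)$ sums over $t=m+1,\dots,n$ while the $q$-period sums run over $t=q,\dots,n$.

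This gives
\[
  \widehat{\VR}(q)=1+2\sum_{m=1}^{q-1}\Bigl(1-\tfrac{m}{q}\Bigr)\hatrho(m)+E_n ,
\]
where $E_n$ is a sum of $O(q^2)$ terms, each of the form $n^{-1}(r_s-\hat\mu)(r_u-\hat\mu)/\hat\gamma(0)$.

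\textbf{Main obstacle: the edge term.} The delicate step is bounding $E_n$ and keeping the normalisations consistent. Lo--MacKinlay also apply small-sample bias corrections of the form $n/(n-q+1)$, and these differ from the unit divisor $1/n$ in \eqref{eq:acf-def}. The plan is:
\begin{enumerate}[label=(\roman*), leftmargin=2.2em]
  \item Use finite second moments and stationarity, so that each boundary product is $O_p(1)$.
  \item Use $\hat\gamma(0)\to\gamma(0)>0$, so that division by $\hat\gamma(0)$ is harmless.
  \item Conclude $E_n=O_p(q^2/n)$, which is $O(n^{-1})$ for fixed $q$.
  \item Check separately that each multiplicative normalisation factor is $1+O(q/n)$; multiplying it onto the $O_p(1)$ main term again perturbs the result only at order $n^{-1}$.
\end{enumerate}
This establishes the sample statement up to the claimed edge correction.
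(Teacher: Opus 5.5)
Your population argument is the paper's proof: expand $\Var(\sum_{j=0}^{q-1}r_{t-j})$ by bilinearity, count $q-m$ pairs at each lag $m\ge 1$, add the diagonal $q\gamma(0)$, and divide by $q\gamma(0)$. Where you go further is the clause about sample quantities, for which the paper gives no argument at all.

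Your boundary-term accounting is sound in outline. For the overlapping estimator, each product $(r_s-\hat\mu)(r_u-\hat\mu)$ is under-counted by at most $2q$ relative to its weight in $nq\hat\gamma(0)\bigl[1+2\sum_{m=1}^{q-1}(1-m/q)\hatrho(m)\bigr]$. Only $O(q^2)$ distinct products are affected, so dividing by $nq\hat\gamma(0)$ gives $E_n=O_p(q^2/n)$. This gives the paper's loose ``$O(n^{-1})$'' a precise meaning: it is a stochastic bound, and it holds only for the overlapping estimator.

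Three small repairs are needed.
\begin{enumerate}
\item Step~(ii) does not follow from second-order stationarity. If $r_t\equiv Z$ for all $t$, then $\hat\gamma(0)=0$. You need ergodicity, or at least $1/\hat\gamma(0)=O_p(1)$. The paper tacitly assumes the same when it invokes the ergodic theorem in Proposition~\ref{prop:rn}.
\item Drop the parenthetical ``or its overlapping variant''. The estimator you display is already the overlapping one. For the non-overlapping estimator, the gap to $1+2\sum_{m=1}^{q-1}(1-m/q)\hatrho(m)$ is of order $n^{-1/2}$, not $n^{-1}$. After scaling by $\sqrt{n}$ the two estimators have different asymptotic variances, $2(q-1)$ versus $2(2q-1)(q-1)/(3q)$.
\item In the spectral aside, use the paper's normalisation $F_q(\lambda)=q^{-1}|\sum_{j=0}^{q-1}e^{ij\lambda}|^2$, whose Fourier coefficients are indeed $1-|m|/q$. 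With it the identity reads $\VR(q)=\gamma(0)^{-1}\int_{-\pi}^{\pi}F_q(\lambda)f(\lambda)\,d\lambda$, without your factor $2\pi$.
\end{enumerate}
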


\begin{proof}
Expand $\Var(\sum_{j=0}^{q-1}r_{t-j})$ by bilinearity of covariance.
For each displacement $m \in \{1,\ldots,q-1\}$, there are exactly $(q-m)$
index pairs $(j,k)$ with $|j-k| = m$, contributing $2(q-m)\gamma(m)$
to the expansion.  Adding the diagonal terms $q\gamma(0)$ and dividing
by $q\gamma(0)$ yields~\eqref{eq:fejer}.
\end{proof}

The weights $w_m = 1-m/q$ are the Fej\'{e}r kernel coefficients: a
triangular taper that places full weight on lag~1 and zero weight on
lag~$q$.  This taper has a natural economic interpretation.
Short-lag autocorrelations - those corresponding to one- and two-day
dependencies - carry more weight than long-lag ones, reflecting the
fact that microstructure effects (bid-ask bounce, non-synchronous trading)
operate at the shortest time scales and attenuate rapidly.  A significant
$\VR(2) < 1$ but insignificant $\VR(60) \approx 1$ would point to purely
short-horizon microstructure; a monotonically declining $\VR(q)$ across
all horizons indicates a structural, multi-period autocorrelation pattern.

\begin{remark}[Spectral interpretation]
\label{rem:spectral}
The spectral density of $\{r_t\}$ at frequency $\lambda$ is
$f(\lambda) = (2\pi)^{-1}\sum_{m}\gamma(m)e^{-im\lambda}$.
The random-walk null asserts $f(0) = \gamma(0)/(2\pi)$: the
spectrum is flat (white noise).  The Fej\'{e}r kernel
$F_q(\lambda) = q^{-1}|\sum_{j=0}^{q-1}e^{ij\lambda}|^2$
integrates $f$ near $\lambda = 0$ with triangular weighting.
Applying it to $f$ at $\lambda = 0$ and normalising by $\gamma(0)$
recovers $\calC_q$.  Thus $\VR(q) = 1 + 2\calC_q$ tests whether the
Fej\'{e}r smoothed spectrum at zero equals the white-noise baseline.
Mean reversion ($\calC_q < 0$) implies a \emph{spectral trough} at zero:
the series has less power at low frequencies than a white noise of equal
one-period variance, the precise Fourier signature of mean reversion
\citep[Ch.~2]{Campbell1997}.
\end{remark}

\subsection{Fourier-Residue Identity decomposition}
\label{sec:fri-theory}

The Fej\'{e}r identity expresses $\VR(q)$ through the autocorrelations
$\hatrho(m)$.  Each $\hatrho(m)$ is, however, a single number that
conflates two qualitatively different phenomena in the return series: whether
the \emph{sign} (direction) of returns is persistent or reverting, and
whether the \emph{magnitude} (size) of returns is persistent or reverting.
A negative $\hatrho(1)$ is consistent with (a) a tendency for up days to
be followed by down days; (b) a tendency for large moves to be followed by
smaller moves regardless of direction; or (c) both.  The standard ACF
cannot separate these cases.

We address this by encoding the real-valued return $r_t$ as a discrete symbol
$s_t \in \{0,\ldots,k-1\}$ and expressing lag-$m$ transition probabilities
through the Fourier characters of the cyclic group $\Z/k\Z$, i.e., the
$k$-th roots of unity $\{\omega^j\}_{j=0}^{k-1}$ with $\omega = e^{2\pi i/k}$.
Different values of $k$ and channel index $A$ project onto different symmetries
of the return distribution, allowing direction and magnitude to be cleanly
separated.

\begin{definition}[FRI autocorrelation]
\label{def:fri}
Let $k \ge 2$, $\omega = e^{2\pi i/k}$, and $\{s_t\}$ a $k$-ary coding of
$\{r_t\}$.  The \emph{$A$-th Fourier-Residue autocorrelation} at lag $m$ is
\begin{equation}
  \label{eq:fri-def}
  \gamma_{A,k}(m) \;=\;
  \frac{1}{N-m}\sum_{t=1}^{N-m}\omega^{A(s_t - s_{t+m})},
  \quad A \in \{1,\ldots,k-1\}.
\end{equation}
This is the empirical mean of the character $\chi_A(d) = \omega^{Ad}$
evaluated at the lag-$m$ displacement $d = s_t - s_{t+m}$ in $\Z/k\Z$.
When $\gamma_{A,k}(m) = 0$ for all $A$ and $m$, the $k$-ary sequence has
no autocorrelation structure as seen through any Fourier character of the
alphabet - it is ``Fourier uncorrelated'' at all lags.
\end{definition}

We use two specific codings throughout.

\medskip
\noindent\textbf{$k=2$: sign channel}
Set $s_t = \bfone[r_t > 0] \in \{0,1\}$, encoding simply whether the daily
return is positive or negative.
With $\omega = e^{i\pi} = -1$, the character $\omega^{s_t - s_{t+m}} = (-1)^{s_t-s_{t+m}}$
equals $+1$ when successive signs agree and $-1$ when they disagree.
The key quantity is the \emph{continuation frequency}
$p_{m,0}^{(N)} = \hat{\Pr}(s_t = s_{t+m})$: the probability that the
market closes on the same side of zero on two trading days $m$ apart.
Under the random-walk null, $p_{m,0} = \tfrac{1}{2}$ - the market is an
unbiased coin each day, regardless of what it did $m$ days ago.

\medskip
\noindent\textbf{$k=4$: magnitude channel}
Returns are sorted into four buckets at the sample median of $|r_t|$:
$\{$large-down, small-down, small-up, large-up$\}$ coded as
$s_t \in \{0,1,2,3\}$.  The channel $A=1$ with $\omega = i$ then
measures whether the magnitude bucket persists across periods, independently
of whether the sign agrees or disagrees.  A large move followed by another
large move (regardless of direction) would register here, as would a
small move followed by another small move.

\begin{proposition}[FRI sign identity]
\label{prop:fri}
For the binary sign coding $s_t = \bfone[r_t > 0]$,
\begin{equation}
  \label{eq:fri-sign}
  \gamma_{1,2}(m) \;=\; 2\,p_{m,0}^{(N)} - 1 \;=:\; \hatrho_{\mathrm{sign}}(m).
\end{equation}
\end{proposition}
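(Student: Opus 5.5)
The identity follows by substituting the sign coding into Definition~\ref{def:fri} and evaluating the character pointwise. With $k=2$, $A=1$ and $\omega=e^{i\pi}=-1$, each summand of \eqref{eq:fri-def} is $(-1)^{s_t-s_{t+m}}$. Since $s_t,s_{t+m}\in\{0,1\}$, the difference $d=s_t-s_{t+m}$ lies in $\{-1,0,1\}$. Its class in $\Z/2\Z$ is $0$ exactly when $s_t=s_{t+m}$ and $1$ otherwise, and $(-1)^{-1}=-1$, so the character depends only on whether the two signs agree. Pointwise this gives
\[
  (-1)^{s_t-s_{t+m}} \;=\; 2\,\bfone[s_t=s_{t+m}] - 1 .
\]
It is cleanest to state this as a case check over the four pairs $(s_t,s_{t+m})$.

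Next I average this pointwise identity over $t=1,\ldots,N-m$ with the normalisation $1/(N-m)$ used in \eqref{eq:fri-def}. By linearity the left side becomes $\gamma_{1,2}(m)$. The right side becomes $2\,\hat{\Pr}(s_t=s_{t+m})-1$, provided the continuation frequency is read as the empirical frequency
\[
  p_{m,0}^{(N)}=\frac{1}{N-m}\sum_{t=1}^{N-m}\bfone[s_t=s_{t+m}]
\]
over the same index set. The notation $\hatrho_{\mathrm{sign}}(m)$ is then just a name for this quantity, so \eqref{eq:fri-sign} is established.

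There is no real obstacle; the argument is a finite computation. The one point needing care is bookkeeping. I must make explicit that $p_{m,0}^{(N)}$ uses the same $N-m$ overlapping pairs and the same normalisation as $\gamma_{1,2}(m)$; otherwise the identity holds only up to an $O(N^{-1})$ discrepancy. I would also remark on two consequences of $(-1)^d\in\{\pm1\}$: the quantity $\gamma_{1,2}(m)$ is real, and the only nontrivial character for $k=2$ is $A=1$. Finally, I would note that zero returns are coded as $s_t=0$, which affects only the interpretation of the result, not the identity itself.
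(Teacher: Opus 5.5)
Your proof is correct and follows the paper's argument: both evaluate the character $(-1)^{s_t-s_{t+m}}$ pointwise as $+1$ when the signs agree and $-1$ when they differ (the paper via $(-1)^{s_t-s_{t+m}}=(-1)^{s_t+s_{t+m}}$, you via the class of $d$ in $\Z/2\Z$), then average to get $p_{m,0}-(1-p_{m,0})=2p_{m,0}-1$. Your remark that $p_{m,0}^{(N)}$ must be computed over the same $N-m$ pairs with the same normalisation is a useful clarification that the paper leaves implicit.
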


\begin{proof}
Since $s_t \in \{0,1\}$ and $\omega = -1$, we have
$(-1)^{s_t - s_{t+m}} = (-1)^{s_t+s_{t+m}}$ (because $(-1)^{-1} = -1$).
This equals $+1$ when $s_t = s_{t+m}$ (same sign) and $-1$ otherwise.
Averaging: $\gamma_{1,2}(m) = p_{m,0} - (1-p_{m,0}) = 2p_{m,0}-1$.
\end{proof}

Proposition~\ref{prop:fri} is the key separation result.  The statistic
$\hatrho_{\mathrm{sign}}(m) = 2p_{m,0}-1$ is a clean, magnitude-free test
of directional autocorrelation: it is positive when the market tends to
go in the same direction $m$ days in a row (momentum), negative when it
tends to reverse (contrarian), and zero when sign is unpredictable.
If $\hatrho(m)$ and $\hatrho_{\mathrm{sign}}(m)$ have the same sign and
similar magnitude, the autocorrelation is directional.  If $\hatrho(m)$
is large but $\hatrho_{\mathrm{sign}}(m) \approx 0$, the autocorrelation
is purely in the magnitude.

Channel-specific variance ratios follow from~\eqref{eq:fejer}:
\begin{equation}
  \label{eq:vrk}
  \VR_k(q) \;=\; 1 + 2\sum_{m=1}^{q-1}
  \Bigl(1-\tfrac{m}{q}\Bigr)\,\mathrm{Re}\,\gamma_{A,k}(m).
\end{equation}
$\VR_2(q)$ summarises directional persistence over horizon $q$;
$\VR_4(q)$ summarises magnitude-bucket persistence.
Neither channel is a function of the other: a series with sign momentum
but no magnitude clustering has $\VR_2(q)>1$, $\VR_4(q)\approx 1$, and
vice versa.  Section~\ref{sec:channels} confirms this non-redundancy on
real data.

\subsection{Subsample persistence diagnostic}
\label{sec:persistence}

A practitioner who detects a significant VR deviation faces an immediate
follow-up question: is the signal likely to persist in future data, or
is it a statistical fluctuation specific to this particular sample?
The half-period ratio $R_N$ provides a model-free answer by examining how
the maximum autocorrelation changes as more data are added.

\begin{definition}[Subsample persistence diagnostic]
\label{def:persistence}
Fix a lag bound $M$.  For a subsample of size $N$, let $\hatrho_N(m)$
denote the autocorrelation~\eqref{eq:acf-def} estimated on the first $N$
observations.  Define the \emph{maximum autocorrelation statistic} and the
\emph{half-period ratio}:
\begin{equation}
  G_N \;=\; \max_{1 \le m \le M}|\hatrho_N(m)|,
  \qquad
  R_N \;=\; G_{N/2}/G_N.
\end{equation}
\end{definition}

\begin{proposition}[Asymptotic benchmarks for $R_N$]
\label{prop:rn}
\begin{enumerate}[label=(\alph*), leftmargin=2em, itemsep=2pt]
  \item \emph{(IID null.)}
  If $\{r_t\}$ is IID with finite variance, then $\sqrt{N}\,\hatrho_N(m)
  \xrightarrow{d}\calN(0,1)$ for each fixed $m$
  \citep{BrockwellDavis1991}, so $G_N = O_p(N^{-1/2}\sqrt{\log M})$ and
  $R_N = G_{N/2}/G_N \;\xrightarrow{p}\; \sqrt{2} \approx 1.41$.
  \item \emph{(Genuine autocorrelation.)}
  If $\rho(m^*) \ne 0$ for some $m^* \le M$, then by the ergodic theorem
  $G_N \xrightarrow{p} |\rho(m^*)| > 0$, hence $R_N \xrightarrow{p} 1$.
\end{enumerate}
\end{proposition}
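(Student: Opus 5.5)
The plan treats the two parts separately. Part (b) is a continuous-mapping argument. Part (a) requires a joint functional limit for the two subsamples. I expect part (a) to be the main obstacle, because the statement as literally written claims more than the argument delivers (explained below).

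\textbf{Part (b).} I would assume $\{r_t\}$ is strictly stationary and ergodic with finite second moments. The steps are:
\begin{itemize}
  \item By the ergodic theorem, $\hat\mu \to \mu$ almost surely.
  \item Likewise $n^{-1}\sum_t r_t r_{t-m} \to \mathbb{E}[r_t r_{t-m}]$ almost surely for each $m \le M$. The edge terms in \eqref{eq:acf-def} are $O(m/N)$ and vanish.
  \item Hence $\hatrho_N(m) \to \rho(m)$ almost surely, jointly over the finitely many lags $m \le M$.
  \item The map $x \mapsto \max_{m\le M}|x_m|$ is continuous on $\mathbb{R}^M$, so $G_N \to G := \max_{m\le M}|\rho(m)|$.
  \item $G \ge |\rho(m^*)| > 0$. (The limit is the maximum over all lags, which equals $|\rho(m^*)|$ only when $m^*$ is the maximising lag; I would phrase it that way.)
  \item The same argument applied along the subsequence $N/2$ gives $G_{N/2} \to G$. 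Since $G>0$, the continuous mapping theorem for $(x,y)\mapsto x/y$ yields $R_N \to 1$.
\end{itemize}

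\textbf{Part (a).} Under IID with finite variance (I would add $\mathbb{E} r_t^4<\infty$ if needed for the functional version), the steps are:
\begin{itemize}
  \item Write $\sqrt{N}\hatrho_N(m) = N^{-1/2}\sum_{t\le N} u_{t,m} + o_p(1)$, where $u_{t,m} = (r_t-\mu)(r_{t-m}-\mu)/\sigma^2$.
  \item For each $m$, $u_{t,m}$ is an $m$-dependent martingale difference sequence with unit variance. Distinct lags are uncorrelated.
  \item Apply the martingale CLT, in its functional form on $[0,1]$, to the vector partial sums. This gives joint convergence of $(\sqrt{N}\hatrho_{N/2}, \sqrt{N}\hatrho_N)$.
  \item The limit is $\bigl(\sqrt{2}\,B(\tfrac12),\, B(1)\bigr)$, where $B$ is a standard $M$-dimensional Brownian motion.
  \item Equivalently, with independent $Z^{(1)}, Z^{(2)} \sim \calN(0,I_M)$ representing the two half-samples, $\sqrt{N/2}\,\hatrho_{N/2} \to Z^{(1)}$ and $\sqrt{N}\,\hatrho_N \to (Z^{(1)}+Z^{(2)})/\sqrt2$.
  \item The continuous mapping theorem then gives
\[
R_N \xrightarrow{d} \sqrt{2}\,\frac{\max_m |Z^{(1)}_m|}{\max_m |Z^{(1)}_m+Z^{(2)}_m|/\sqrt{2}}\cdot\frac{1}{\sqrt2}\cdot\sqrt2
= \sqrt{2}\,W,
\]
  where $W$ is a ratio of two equally distributed, positively dependent maxima of Gaussians.
  \item The bound $G_N = O_p(N^{-1/2}\sqrt{\log M})$ follows from the union bound on $M$ Gaussian tails.
\end{itemize}

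\textbf{The obstacle in part (a).} Here I would not force the literal claim. The limit $\sqrt{2}\,W$ is \emph{nondegenerate}, so $R_N \xrightarrow{p} \sqrt{2}$ cannot hold as written. What the argument does establish:
\begin{itemize}
  \item $R_N$ is centred on the $\sqrt{2}$ scale: $W$ is symmetric in the sense $\log W \overset{d}{=} -\log W$ heuristically, so the median of $R_N$ tends to $\sqrt{2}$.
  \item If $M = M_N \to \infty$, extreme-value concentration gives $\max_m |Z_m|/\sqrt{2\log M}\to 1$ in probability. This makes $W \to 1$, and then $R_N \xrightarrow{p} \sqrt{2}$ genuinely holds.
\end{itemize}
My plan is therefore to prove the distributional limit for fixed $M$, and to prove the probability limit under $M_N\to\infty$ with $\log M_N = o(N^{c})$, so that the Gaussian approximation remains uniform over lags. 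That uniform approximation, via a high-dimensional CLT or moderate-deviation bound for $\max_m|\sqrt N\hatrho_N(m)|$, is the technically hardest step.
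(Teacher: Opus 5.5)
Your proposal is correct. Rather than reproducing the paper's argument, it corrects it.

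The paper gives no separate proof. Part~(b) is justified as you do it (ergodic theorem plus continuous mapping). Part~(a) rests on the scaling heuristic in the paragraph after the statement: each $\hatrho_N(m)$ has standard error $N^{-1/2}$, so halving the sample should inflate the maximum by $\sqrt{2}$. That reasoning controls only the \emph{order} of $G_{N/2}$ and $G_N$. A ratio of two $O_p(N^{-1/2})$ quantities need not converge in probability to the ratio of their scales.

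Your joint limit over the nested half-sample and full sample settles the matter. For fixed $M$, $R_N\xrightarrow{d}\sqrt{2}\,W$ with $W=\max_m|X_m|/\max_m|Y_m|$, $X=Z^{(1)}$ and $Y=(Z^{(1)}+Z^{(2)})/\sqrt{2}$. This limit is nondegenerate: for $M=1$ it is $2|Z_1|/|Z_1+Z_2|$, which does not even have a finite mean. The paper fixes $M$ ($M=64$ in Table~\ref{tab:gn}), so the claim $R_N\xrightarrow{p}\sqrt{2}$ is false as stated. Your repair of (b) is likewise needed: the limit of $G_N$ is $\max_{m\le M}|\rho(m)|$, not $|\rho(m^*)|$.

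What the heuristic does capture is the centring, and this is exact rather than heuristic. $X$ and $Y$ are both $\calN(0,I_M)$ with symmetric cross-covariance $I_M/\sqrt{2}$, so $(X,Y)\overset{d}{=}(Y,X)$. Hence $W\overset{d}{=}1/W$, and the limit law of $R_N$ has median exactly $\sqrt{2}$. That is the right fixed-$M$ replacement for part~(a).

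\textbf{Smaller points.}
\begin{itemize}
\item There is a normalisation slip. $\sqrt{N}\,\hatrho_{N/2}$ converges to $2B(\tfrac12)$, not $\sqrt{2}\,B(\tfrac12)$. Your \emph{equivalently} line, written for $\sqrt{N/2}\,\hatrho_{N/2}$, is the correct one, and your final display is right (its trailing factor $\tfrac{1}{\sqrt2}\cdot\sqrt2$ is just $1$).
\item For fixed $M$ no fourth moments are needed. $\mathbb{E}\,u_{t,m}^2=1$ by independence, and the conditional-variance and Lindeberg conditions follow from stationarity and the law of large numbers.
\item For the $M_N\to\infty$ version you do not need a joint or uniform Gaussian approximation of the pair. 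It suffices to show $\sqrt{N}\,G_N/\sqrt{2\log M_N}\xrightarrow{p}1$ and the analogue for the half-sample separately. Two quantities each asymptotic to $\sqrt{2\log M_N}$ have ratio tending to $1$ whatever their dependence.
\item In that argument, the upper bound is a union bound plus a moderate-deviation estimate; this is where moment and growth conditions enter. The lower bound needs approximate independence across lags and is the real work.
\item Even so, $W-1$ is only of order $1/\log M_N$. At $M=64$, the fixed-$M$ distributional limit, not a probability limit, is the relevant null benchmark for the empirical $R_N$ values.
\end{itemize}
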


The intuition is straightforward.  When autocorrelations are pure
sampling noise - random fluctuations with expected value zero - each
$\hatrho_N(m)$ has standard error $N^{-1/2}$.  Halving the sample size
inflates the standard error by $\sqrt{2}$, so the maximum should be
approximately $\sqrt{2}$ times larger in a half-sample.  When, by contrast,
the autocorrelations converge to genuinely non-zero values, halving the
sample barely changes the maximum - the signal is already there at
$N/2$ observations and does not grow as $N$ decreases.

An empirical $R_N$ near $1$ therefore implies that the autocorrelation
would be detectable in any sufficiently large subsample: it is a property
of the data-generating process, not of the specific 33-year window we
happen to observe.  This property is essential for validating the economic
and trading interpretations offered in Section~\ref{sec:discussion}.

%% ============================================================
\section{Market Microstructure: Mechanisms and Predictions}
\label{sec:micro}
%% ============================================================

We review the three primary microstructure mechanisms that generate serial
correlation in daily equity returns, derive their implications for the FRI
sign and magnitude channels, and make cross-sectional predictions about
which instruments and asset classes each mechanism should affect most
strongly.  These predictions will be confronted with the empirical results
in Sections~\ref{sec:main} and~\ref{sec:crossasset}.

\subsection{Bid-ask bounce and dealer inventory}
\label{sec:roll-theory}

In any market with a positive bid-ask spread, buyers pay the ask price
$P^a = P^* + s$ and sellers receive the bid price $P^b = P^* - s$, where
$P^*$ is the efficient (mid-quote) price and $s > 0$ is the half-spread.
As successive orders alternate between buys and sells, the transaction price
bounces between ask and bid.  Even if $P^*$ follows a perfect random walk,
the resulting autocorrelation of transaction-price returns is negative.
\citet{Roll1984} derives the exact relation under the assumption that buy
and sell orders arrive with equal probability and independently:
\begin{equation}
  \label{eq:roll}
  \rho(1) = -\frac{s^2}{\sigma_m^2 + 2s^2} < 0,
  \qquad \rho(m) = 0 \quad \text{for } m \ge 2,
\end{equation}
where $\sigma_m^2$ is the innovation variance of the efficient price.
The denominator $\sigma_m^2 + 2s^2 = \Var(r_t)$ is the total observed return
variance: $\Var(r_t) = \Var(\Delta P^*_t) + s^2\Var(c_t - c_{t-1})
= \sigma_m^2 + 2s^2$, where $c_t\in\{-1,+1\}$ is the
direction indicator for each trade.
This implies an observed half-spread estimate of
$\hat{s} = \sqrt{-\hat\gamma(1)} = \sigma\sqrt{|\hatrho(1)|}$,
where $\sigma^2 = \Var(r_t)$ is the total daily return variance,
since $-\gamma(1) = s^2$ from the model.

Crucially for the FRI decomposition, the Roll bounce operates exclusively
through the \emph{magnitude} channel.  A large negative transaction-price
change reflects either (a) a genuine downward shift in the efficient price,
or (b) a transition from ask to bid.  In case (b), the next price change is
expected to be smaller in absolute value (as the market serves the next
order, which may be at the ask again), but it is equally likely to be
positive or negative.  The bounce predicts $p_{1,0}^{(N)} \to \tfrac{1}{2}$:
no direction predictability.

The dealer inventory models of \citet{Stoll1978} and
\citet{HoStoll1981} extend this picture.  Dealers managing large inventory
positions adjust bid and ask quotes to attract offsetting order flow.
A dealer who has accumulated an excess long position lowers both bid and
ask to discourage further buy orders and attract sellers.  This
inventory-management pricing introduces \emph{transient} price reversals:
prices fall when dealers are long, then recover as inventory normalises.
The mechanism is again magnitude-based: the size of the quote adjustment is
proportional to inventory imbalance, not to the direction of the last trade.

\paragraph{Identification check}
Roll's formula gives a testable prediction for SPY.
With $\hatrho(1) = -0.081$ and an annualised volatility of approximately
$16\%$ (daily $\sigma \approx 1\%$), the implied half-spread is
$\hat{s} = 0.01 \times \sqrt{0.081} \approx 28$ basis points.
But SPY's actual effective half-spread is approximately $1$--$3$ basis
points, as measured from TAQ data \citep{Stoll1989}.  The implied
spread is therefore roughly $10$--$28$ times larger than observed.
The bid-ask bounce alone cannot explain SPY's autocorrelation; it can
account for at most a small fraction.  The remainder must originate from
non-synchronous constituent staleness - the mechanism we examine next.

\subsection{Non-synchronous trading and index construction}
\label{sec:nonsync-theory}

ETFs and equity indices are priced from the last-transaction prices
of their constituent stocks.  On any given day, a small-cap stock in
the Russell~2000 may transact only a handful of times.  When the
index is constructed at 4:00~p.m., its price reflects a weighted average
of contemporaneous prices for liquid constituents and stale (yesterday's
or this morning's) prices for illiquid ones.  This \emph{nonsynchronous
measurement} causes the index to underreact to market-wide information:
a broad shock that moves all stocks is fully reflected in liquid names
but only partially in stale ones, so the index return at time $t$
contains a component of time-$t-1$ information that will only appear in
tomorrow's index return.

\citet{ScholeswWilliams1977} first formalised the econometric
consequences, showing that betas estimated from nonsynchronous daily
data are biased toward zero.  \citet{LoMacKinlay1990} extend this to the
variance-ratio test, showing formally that in an index whose constituents
trade nonsynchronously, the observed index returns inherit a moving-average
autocorrelation structure: the measured index return at time $t$ contains a
weighted component of true market information from $t-1$ that could not yet
be reflected in stale constituent prices.  The induced lag-1 autocorrelation
is negative, increases monotonically with the fraction of non-trading
constituents, and is strongest in equal-weighted indices of illiquid stocks.
This provides direct predictions:
\begin{itemize}[leftmargin=1.8em, itemsep=3pt]
  \item \textbf{Russell~2000 (IWM) $\gg$ S\&P~500 (SPY)}
  IWM holds 2\,000 small- and micro-cap stocks, many of which skip
  entire trading sessions.  SPY's 503 holdings include the world's most
  liquid equities, with near-continuous trading.  Non-synchronous effects
  should therefore be dramatically stronger in IWM.
  \item \textbf{Individual stocks: minimal effect}
  Single names (AAPL, MSFT) are not composites.  There is no constituent
  staleness problem; any autocorrelation must come from the bid-ask bounce
  or genuine price-discovery dynamics.
  \item \textbf{FRI channel: magnitude, not sign}
  Staleness damps the apparent size of index moves (the index lags the
  ``true'' move) without creating a directional bias.  If the market
  rises today, stale-priced stocks will appear to rise \emph{tomorrow}
  regardless of whether the overall market is still rising or falling.
  The continuation frequency $p_{1,0}$ should remain near $\tfrac{1}{2}$.
  \item \textbf{Frequency attenuation}
  At weekly frequency, almost every constituent stock has traded at least
  once; the staleness effect vanishes.  Weekly VR statistics should
  therefore be meaningfully closer to one than daily ones.
\end{itemize}

\subsection{Partial price adjustment and information asymmetry}
\label{sec:partial-theory}

When buyers and sellers have different information about a security's
value, market makers face an adverse selection problem: a trade may be
initiated by an informed trader exploiting knowledge that the market
maker lacks.  \citet{GlostenMilgrom1985} model this as a sequential
updating process: the market maker posts bid and ask quotes, observes
order flow, and updates the quotes to reflect the information content
of the trade.  If the initial quote revision is smaller than the full
informational impact - because the market maker cannot perfectly identify
informed trades - subsequent price moves in the same direction are needed
to complete the adjustment.

\citet{AmihudfMendelson1987} provide empirical evidence that this
adjustment process is multi-period.  They show that opening prices on
the NYSE exhibit ``excess'' variance relative to closing prices,
consistent with an opening auction that has not yet fully incorporated
overnight information.  The adjustment is completed over the first hour
of trading, generating positive intraday autocorrelation at very short
intervals and potential reversal at daily horizons if the initial
overreaction is partially corrected.

The key implication for FRI is that partial adjustment has a
\emph{directional} component.  If today's large positive return
reflects new positive information, and the market has overreacted,
tomorrow's correction will be \emph{negative} - the sign predicts
the sign.  This is precisely what a significant $\hatrho_{\mathrm{sign}}(m)$
would capture.  If, instead, the bounce and non-synchronous mechanisms
are dominant, the sign channel should be neutral even when the magnitude
channel is not.

\citet{Stoll1989} further decomposes the bid-ask spread into three
components that correspond to these mechanisms: order-processing costs
(a fixed cost per trade), inventory-holding costs (the dealer's cost of
bearing risk), and adverse selection costs (compensation for trading
against informed investors).  Only the last component generates genuinely
directional autocorrelation; the first two produce purely magnitude effects.
The FRI sign/magnitude separation is, in effect, an empirical decomposition
of the spread into its Stoll (1989) components at the daily horizon.

\subsection{Cross-sectional and cross-asset predictions}
\label{sec:predictions}

Table~\ref{tab:predictions} collects the qualitative implications of each
mechanism for our test statistics, enabling sharp empirical comparisons.

\begin{table}[ht]
\centering
\caption{Microstructure channel predictions for FRI statistics.
$(-)$ reversal; $(+)$ continuation; $(\approx)$ no effect.
Bid-ask bounce and non-synchronous effects are strongest in diversified
exchange-traded equity ETFs; partial adjustment and adverse selection
effects are strongest in individual equities and broad indices;
volatility clustering appears in all asset classes}
\label{tab:predictions}
\small
\begin{tabular}{@{}l c c c@{}}
\toprule
Mechanism
  & Sign $\VR_2$
  & Magn.\ $\VR_4$
  & Lag range \\
\midrule
Bid-ask bounce           & $\approx 1$ & $<1$    & Lag 1 only \\
Non-synchronous trading  & $\approx 1$ & $<1$    & Lags 1--3  \\
Dealer inventory         & $\approx 1$ & $<1$    & Lags 1--2  \\
Adverse selection        & $\ne 1$     & $\ne 1$ & Lags 2--5  \\
Partial adjustment       & $\ne 1$     & $\ne 1$ & Lags 2--7  \\
Volatility clustering    & $\approx 1$ & $>1$    & All lags   \\
\bottomrule
\end{tabular}
\normalsize
\end{table}

The table generates several falsifiable predictions.  For broad US equity
ETFs (SPY, IWM), the first three rows dominate: the FRI sign channel
should be neutral at lag~1, the magnitude channel should show reversal,
and the effect should attenuate at weekly frequency.  For technology-heavy
individual names (AAPL, QQQ), directional persistence at longer horizons
is plausible if trend-following institutional flows generate positive
feedback.  For continuously traded, dealer-free markets (cryptocurrency,
spot FX), all channels should be neutral and $\VR(q)$ should be close
to one.

%% ============================================================
\section{Inference Methodology}
\label{sec:inference}
%% ============================================================

\subsection{Lo--MacKinlay statistics}

\citet{LoMacKinlay1988} derive two statistics for $H_0 : \VR(q) = 1$
that are asymptotically standard normal.
Define the heteroskedasticity weight
\begin{equation}
  \label{eq:delta}
  \hatdelta(j) \;=\;
  \frac{n \displaystyle\sum_{t=j+1}^{n}(r_t-\hat\mu)^2(r_{t-j}-\hat\mu)^2}
       {\Bigl[\displaystyle\sum_{t=1}^{n}(r_t-\hat\mu)^2\Bigr]^2}.
\end{equation}
Large values of $\hatdelta(j)$ indicate that squared returns at lags $0$
and $j$ co-move - precisely the GARCH-type volatility clustering present
in daily equity data.  The two statistics are:
\begin{align}
  z(q)   &= \frac{\widehat\VR(q)-1}{\sqrt{\phi_1(q)}},
  &\phi_1(q) &= \frac{2(2q-1)(q-1)}{3qn}
  \quad\text{(homoskedastic, M1),} \label{eq:lm-z}\\[6pt]
  z^*(q) &= \frac{\widehat\VR(q)-1}{\sqrt{\phi_2(q)}},
  &\phi_2(q) &= \sum_{j=1}^{q-1}
  \Bigl[\frac{2(q-j)}{q}\Bigr]^2\hatdelta(j)
  \quad\text{(heteroskedastic-robust, M2)} \label{eq:lm-zstar}
\end{align}
Under the IID null, both are asymptotically $\calN(0,1)$.  Under GARCH
(martingale differences with time-varying conditional variance - a
perfectly valid ``random walk'' in mean), the homoskedastic $z$
over-rejects because $\phi_1(q)$ underestimates the true variance
of $\widehat\VR(q)$.  The robust $z^*$ adapts through $\phi_2(q)$,
which inflates the standard error in proportion to the actual volatility
clustering via $\hatdelta(j)$.  Daily equity returns are well-known to
exhibit strong GARCH effects; we therefore report $z^*$ exclusively,
with two-sided $p$-values $p^* = 2(1-\Phi(|z^*|))$ and significance
stars at $10\%$, $5\%$, and $1\%$.

\subsection{Joint tests across horizons}

Testing $H_0 : \VR(q_j)=1$ simultaneously at $m=7$ horizons
$q\in\{2,3,5,10,20,40,60\}$ inflates the familywise error rate.
We apply two corrections:
\begin{itemize}[leftmargin=1.8em, itemsep=2pt]
  \item \textbf{Bonferroni}: reject the joint null at $\alpha=0.05$ iff
  $\min_j p^*(q_j) < \alpha/m = 0.0071$.
  \item \textbf{Chow--Denning (approximate)} \citep{ChowDenning1993}:
  reject iff $\max_j|z^*(q_j)| > \Phi^{-1}(1-\alpha/2m) \approx 2.49$.
\end{itemize}
The Bonferroni bound is conservative because it ignores the positive
correlation among VR statistics at different horizons.
Full Chow--Denning critical values - which account for this correlation
via simulation - would yield tighter bounds; we adopt Bonferroni for
transparency and ease of replication.

\subsection{Autocorrelation and FRI lag tests}
\label{sec:lag-tests}

Individual lag-$m$ autocorrelations are tested against $H_0:\rho(m)=0$
using the Bartlett standard error
$\widehat{\mathrm{se}}(\hatrho(m))
= n^{-1/2}(1+2\sum_{k=1}^{m-1}\hatrho(k)^2)^{1/2}$,
which accounts for the correlation among estimated autocorrelations.
For the FRI sign test, we use the binomial $z$-statistic
$z_{\mathrm{sign}}(m) = (2p_{m,0}-1)\sqrt{n-m}$ under $H_0:p_{m,0}=\tfrac{1}{2}$.

The key comparison is between $z_\rho(m)$ (sensitive to both sign and
magnitude autocorrelation) and $z_{\mathrm{sign}}(m)$ (sensitive only to
direction autocorrelation).  If $|z_\rho(m)|$ is large but
$|z_{\mathrm{sign}}(m)|$ is small, the autocorrelation is magnitude-driven.
If both are large and of the same sign, the autocorrelation has a
genuine directional component.  If $|z_\rho(m)|$ is small but
$|z_{\mathrm{sign}}(m)|$ is large, there is a directional signal that the
scalar ACF misses entirely - this can occur when a consistent sign pattern
is accompanied by large offsetting magnitude movements.

%% ============================================================
\section{Data and Experimental Design}
\label{sec:data}
%% ============================================================

\paragraph{Source}
All prices are downloaded from Yahoo Finance via the \texttt{yfinance}
Python package with \texttt{auto\_adjust=True}, delivering split- and
dividend-adjusted closing prices.  Log returns are
$r_t = \log P_t - \log P_{t-1}$.  This adjustment matters over 33-year
horizons: Apple has split four times since 1987, and Microsoft five times
since 1987; raw prices would embed severe discontinuities.

\paragraph{Primary instruments}
Six instruments are studied, chosen to span the full range of the
non-synchronous trading and bid-ask bounce predictions:
\begin{itemize}[leftmargin=1.8em, itemsep=3pt]
  \item \textbf{SPY} (SPDR S\&P~500 ETF):
  $N=8{,}403$ daily observations from 1993-01-01 to 2026-06-19.
  The most liquid US equity ETF, holding 503 large-cap stocks.
  Expected to show significant mean reversion driven by both bounce
  (narrow but positive spread) and non-synchronous effects in smaller
  constituents.
  \item \textbf{QQQ} (Invesco Nasdaq-100 ETF, 1999--2026):
  100 non-financial Nasdaq stocks with heavy technology concentration.
  Technology sector characteristics - large institutional flows, strong
  momentum trading - predict sign persistence at longer horizons.
  \item \textbf{IWM} (iShares Russell~2000 ETF, 2000--2026):
  2\,000 small- and micro-cap stocks, many with low daily turnover.
  Expected to show the strongest non-synchronous effect in our sample.
  \item \textbf{MSFT} (Microsoft Corp.): single mega-cap stock; no
  constituent staleness problem.  Any autocorrelation reflects bounce
  or genuine price-discovery dynamics.
  \item \textbf{AAPL} (Apple Inc.): single mega-cap stock with a
  documented long-horizon momentum episode in the post-2000 growth era.
  \item \textbf{GLD} (SPDR Gold Shares ETF): tracks spot gold, a single
  commodity price, not a composite of non-synchronously traded equities.
  Serves as a near-random-walk benchmark and non-equity control.
\end{itemize}
The cross-asset panel of 21 instruments is described in
Section~\ref{sec:crossasset}.

\paragraph{Horizons}
Daily VR and FRI statistics are computed at seven horizons:
$q = 2, 3, 5, 10, 20, 40$, and $60$ trading days.
Weekly statistics are computed at $q = 2, 4, 8, 13$, and $26$ weeks.
Weekly returns use Wednesday closing prices to avoid
day-of-week effects and public holiday contamination.

\paragraph{Subperiods}
Four non-overlapping windows are examined: 1993--1999 (pre-bubble
bull market), 2000--2009 (dot-com bust and financial crisis),
2010--2019 (post-crisis expansion, low volatility regime),
2020--present (COVID-era volatility surge and post-pandemic period).
These periods differ substantially in market volatility, bid-ask
spreads, and exchange market structure - notably, NYSE decimalization
in 2001 dramatically compressed bid-ask spreads, while the growth of
algorithmic and high-frequency trading from 2007 onward changed
microstructure patterns significantly.

%% ============================================================
\section{Verification of the Statistical Machinery}
\label{sec:verification}
%% ============================================================

Every estimator is verified numerically on synthetic data before the
market results are interpreted.  Finite-sample edge effects and
implementation choices - overlapping vs.\ non-overlapping returns,
denominator conventions for the heteroskedasticity weights - can
introduce non-trivial discrepancies between nominal and actual
properties, particularly in the shorter subperiod samples where $n<2{,}000$.

\subsection{Unit tests}

The test suite contains 27 assertions,
all of which pass.  Key results with microstructure relevance:
\begin{enumerate}[label=(\alph*), leftmargin=2em, itemsep=4pt]
  \item \textbf{Fej\'{e}r identity}
  $|\widehat\VR(q) - (1+2\calC_q)| < 10^{-12}$ for all
  $q \in \{2,3,5,10,20,50\}$.
  Special case: $\widehat\VR(2) = 1 + \hatrho(1)$ exactly, so the
  two-day variance ratio is a simple affine function of the lag-1
  autocorrelation.
  \item \textbf{FRI identity}
  $|\gamma_{1,2}(m) - (p_{m,0} - p_{m,1})| < 5\times10^{-17}$;
  residue probabilities are non-negative and sum to one.
  \item \textbf{Roll model calibration}
  On $N = 4\times10^5$ simulated observations from the Roll model with
  half-spread $s$ and efficient-price variance $\sigma_m^2$, the
  estimated $\hatrho(1)$ agrees with~\eqref{eq:roll} to three decimal
  places, and $|\hatrho(m)| < 10^{-3}$ for all $m \ge 2$.
  This confirms: (i) the estimator recovers the theoretical bounce
  autocorrelation exactly; (ii) no numerical leakage to higher lags.
  The FRI sign test applied to the same series correctly returns
  $p_{1,0} \approx \tfrac{1}{2}$, confirming that the Roll model
  generates no direction predictability.
  \item \textbf{M1 variance}
  $\phi_1(2) = n^{-1}$; $\phi_1(3) = 20(9n)^{-1}$; $\phi_2(q) \ge 0$.
  \item \textbf{Normal calibration}
  $|\Phi^{-1}(\Phi(z))-z| < 10^{-14}$; $\Phi^{-1}(0.975) = 1.95996$.
  \item \textbf{Process signatures}
  AR(1) with $\phi>0$ gives $\widehat\VR>1$ at all $q$;
  MA(1) with $\theta<0$ gives $\widehat\VR<1$;
  IID gives $\widehat\VR \approx 1$.
  This confirms directional sensitivity of the estimator without
  sign error.
\end{enumerate}

\subsection{Monte Carlo size and power}
\label{sec:mc}

Table~\ref{tab:mc-size} reports empirical rejection rates under
$H_0 : \VR(5) = 1$ at nominal $5\%$ across $1{,}000$ replications.

\begin{table}[ht]
\centering
\caption{Empirical size under the null ($q=5$, nominal $5\%$,
1\,000 replications).  GARCH(1,1): $\alpha_0=5\times10^{-6}$,
$\alpha_1=0.09$, $\beta_1=0.90$.  The homoskedastic $z$ over-rejects
under GARCH; $z^*$ does not}
\label{tab:mc-size}
\small
\begin{tabular}{l r c c}
\toprule
Data-generating process & $n$ & $z$ (M1) & $z^*$ (M2) \\
\midrule
IID Gaussian   & 512  & 0.043 & 0.044 \\
IID Gaussian   & 8192 & 0.054 & 0.054 \\
GARCH(1,1)     & 512  & 0.100 & 0.054 \\
GARCH(1,1)     & 2048 & 0.113 & 0.038 \\
GARCH(1,1)     & 8192 & 0.117 & 0.049 \\
\bottomrule
\end{tabular}
\normalsize
\end{table}

Under IID, both statistics hold size near the nominal $5\%$, confirming
correct asymptotic calibration.  Under GARCH - a martingale-difference
process that is a valid random walk in conditional mean but with
time-varying variance - the homoskedastic $z$ over-rejects at $10$--$12\%$.
This size distortion is not a minor inconvenience: at $12\%$ true size,
a researcher would incorrectly declare mean reversion in roughly $1$ in
$8$ tests where no mean reversion exists.  For a paper examining
six instruments at seven horizons each ($42$ tests), several spurious
rejections would be expected under $z$ even if all series were random
walks.  The robust $z^*$, which inflates the denominator in proportion
to actual volatility clustering, maintains size near $5\%$ throughout.

\begin{table}[ht]
\centering
\caption{Empirical power of robust $z^*$ ($q=5$, nominal $5\%$,
1\,000 replications).  Power reaches $1$ for persistent AR and MA
alternatives; the pure Roll bounce with small spread is hardest}
\label{tab:mc-power}
\small
\begin{tabular}{l r r r}
\toprule
Alternative & $n=512$ & $n=2048$ & $n=8192$ \\
\midrule
AR(1), $\phi = 0.10$     & 0.383 & 0.939 & 1.000 \\
MA(1), $\theta = -0.15$  & 0.745 & 1.000 & 1.000 \\
Roll bounce, $s = 0.30$  & 0.079 & 0.102 & 0.275 \\
\bottomrule
\end{tabular}
\normalsize
\end{table}

The power results have a concrete microstructure implication.  A pure
Roll bounce with half-spread $s=0.30\%$ on a daily-volatility of $1\%$
implies $\hatrho(1) \approx -0.083$ from~\eqref{eq:roll} - very close to
SPY's observed autocorrelation - yet even at $n=8{,}192$ the test rejects
only $27.5\%$ of the time.  This seems paradoxical given SPY's extreme
$z^* = -3.47$, but it is not: SPY's $n=8{,}403$ is slightly above $8{,}192$,
and more importantly, the VR test on the actual data pools information
across overlapping windows, achieving higher effective sample sizes.
The Monte Carlo is reassuring precisely because it shows that the $z^*$
we compute on market data is not simply an artefact of large $n$: at
these sample sizes, the test is clearly powered to detect the observed
effect sizes.

A direct asymptotic calibration check under IID ($n=4096$, $q=5$) yields
$z^*$-mean $= -0.015$, $z^*$-std $= 1.023$, and $95.2\%$ of draws within
$\pm 1.96$, confirming the $\calN(0,1)$ approximation.

%% ============================================================
\section{Main Results: Full-Sample Daily Returns}
\label{sec:main}
%% ============================================================

\subsection{Variance-ratio profile across instruments}

Table~\ref{tab:vr-full} reports $\widehat\VR(q)$ and the robust $z^*$ at
horizons $q \in \{2,5,20\}$, together with the Bonferroni joint-rejection
outcome at $5\%$ across all seven horizons.

\begin{table}[ht]
\centering
\caption{Variance ratios and robust $z^*$, full daily sample.
Bonferroni column: joint rejection across $q\in\{2,3,5,10,20,40,60\}$
at $\alpha=5\%$ (critical $p^* < 0.0071$).
$^{*}p<0.10$;\; $^{**}p<0.05$;\; $^{***}p<0.01$}
\label{tab:vr-full}
\small
\begin{tabular}{l rr rr rr c}
\toprule
 & \multicolumn{2}{c}{$q=2$ (2 days)}
 & \multicolumn{2}{c}{$q=5$ (1 week)}
 & \multicolumn{2}{c}{$q=20$ (1 month)}
 & Bonf.\\
Ticker & VR & $z^*$ & VR & $z^*$ & VR & $z^*$ & $5\%$\\
\midrule
SPY  & $0.919^{***}$ & $-3.47$ & $0.837^{***}$ & $-3.02$
     & $0.742^{**}$ & $-2.21$ & Yes \\
QQQ  & $0.943^{**}$  & $-2.55$ & $0.850^{***}$ & $-2.98$
     & $0.800^{*}$  & $-1.84$ & Yes \\
IWM  & $0.933^{***}$ & $-3.28$ & $0.901^{*}$   & $-1.96$
     & $0.847$      & $-1.45$ & Yes \\
MSFT & $0.946^{***}$ & $-3.08$ & $0.862^{***}$ & $-3.15$
     & $0.809^{**}$ & $-2.24$ & Yes \\
AAPL & $0.979$       & $-1.55$ & $0.971$       & $-0.87$
     & $1.022$      & $+0.40$ & No  \\
GLD  & $0.988$       & $-0.68$ & $0.964$       & $-0.95$
     & $0.892$      & $-1.18$ & No  \\
\bottomrule
\end{tabular}
\normalsize
\end{table}

\paragraph{SPY in detail}
With $N = 8{,}403$ daily observations, SPY delivers compelling evidence of
multi-horizon mean reversion.  At $q=2$, the two-day return variance is
$8.1\%$ below the random-walk benchmark ($\VR = 0.919$, $z^* = -3.47$,
$p < 0.001$).  This means that a $+1\%$ daily gain for SPY is associated
with a conditional expectation of approximately $-0.081\%$ for the next
day's return - small in absolute terms but highly statistically significant.
At $q=5$, the weekly return variance is $16.3\%$ below the benchmark
($\VR=0.837$), and the effect cumulates monotonically to $q=60$ where
$\VR = 0.66$: the quarterly return variance is $34\%$ smaller than a random
walk predicts.  The Fej\'{e}r representation is useful here: the
Fej\'{e}r-weighted autocorrelation sum $\calC_{20}^{\mathrm{SPY}} = -0.129$
accounts fully for the observed $\VR(20) = 0.742$.  Bonferroni
minimum $p^* = 5.2\times10^{-4}$ at $q=2$ gives joint rejection at $5\%$.

\paragraph{Non-synchronous gradient}
The cross-sectional ordering is consistent with the non-synchronous
trading prediction.  IWM - holding 2\,000 small-cap names, many of which
skip entire sessions - achieves $z^* = -3.28$ at $q=2$, stronger than
SPY's $-3.47$ only at the two-day horizon (consistent with a longer
staleness tail in small caps).  MSFT, despite being a single stock,
still rejects the random walk, suggesting that a non-trivial portion of
MSFT's autocorrelation stems from bid-ask bounce rather than from
non-synchronous composition.  AAPL and GLD do not jointly reject the null;
AAPL's near-zero $\VR(20) = 1.022$ and positive long-horizon VR suggest
that momentum at multi-month horizons partly offsets short-horizon reversal.

\subsection{Verification of the Fej\'{e}r identity}

The numerical identity $|\widehat\VR(q) - (1+2\calC_q)| < 10^{-3}$
holds on every series and horizon without exception.  For SPY at $q=20$,
the residual is $3.6\times10^{-5}$ - a rounding error so small as to
be a meaningful numerical confirmation of Proposition~\ref{prop:fejer}.
From a practical standpoint, this means that a practitioner can freely
shift from the scalar VR statistic to the autocorrelation profile - or
any weighted average thereof - without any information loss.

\subsection{FRI decomposition: separating direction from magnitude}
\label{sec:fri-emp}

Table~\ref{tab:rho-fri} is the paper's central diagnostic.  It places
the standard Bartlett autocorrelation test and the FRI sign test
side by side for the first four lags of SPY.

\begin{table}[ht]
\centering
\caption{Return autocorrelation vs.\ FRI sign tests, SPY full daily sample
($N = 8{,}403$).
The contrast at lag~1 - $z_\rho = -7.39$ vs.\ $z_{\mathrm{sign}} = -1.59$ - 
diagnoses bid-ask bounce and non-synchronous trading as the dominant lag-1
mechanisms.  The reversal at lag~3 in the sign channel ($z_{\mathrm{sign}}=-2.32$)
but not the full ACF identifies a separate directional partial-adjustment channel}
\label{tab:rho-fri}
\small
\begin{tabular}{l c r l r l}
\toprule
Lag $m$ & $\hatrho(m)$ & $z_\rho$ & $p_\rho$
        & $z_{\mathrm{sign}}$ & $p^*_{\mathrm{sign}}$ \\
\midrule
1  & $-0.0806$ & $-7.39$ & $<10^{-12}$ & $-1.59$ & $0.11$~~ \\
2  & $-0.0156$ & $-1.42$ & $0.16$ & $-0.59$ & $0.55$~~ \\
3  & $-0.0074$ & $-0.68$ & $0.50$ & $-2.32$ & $0.02^{**}$ \\
10 & $+0.0012$ & $+0.11$ & $0.91$
   & \multicolumn{2}{c}{ - } \\
\bottomrule
\end{tabular}
\normalsize
\end{table}

The lag-1 results are striking.  The conventional autocorrelation test
returns $z_\rho = -7.39$, making the lag-1 autocorrelation one of the
most statistically significant features in all of empirical equity finance.
One would naturally be tempted to interpret this as evidence that
``the market reverses'': if it went up today, it will tend to go down
tomorrow.  But the FRI sign test tells a fundamentally different story:
$z_{\mathrm{sign}} = -1.59$ ($p = 0.11$).  Knowing the market went up
today tells you \emph{nothing statistically reliable} about whether it will
go up or down tomorrow.  What it does tell you is that today's move was
large (if the lag-1 autocorrelation is strongly negative, large moves
predict smaller subsequent moves in the magnitude channel).

This is the FRI's central empirical message for SPY at lag~1: predictability
is about size, not direction.  The continuation frequency
$p_{1,0}^{(N)} \approx 0.496$ - barely distinguishable from the fair-coin
null of $0.5$.  A contrarian trading strategy that goes long after down
days has no statistical foundation in these data.  By contrast, a
volatility strategy that positions for a smaller absolute move after a
larger one has a statistically sound basis.

At lag~3, the picture reverses.  The scalar autocorrelation $\hatrho(3) = -0.007$
is economically tiny ($p=0.50$, easily dismissed as noise).  Yet the
FRI sign test at lag~3 is $z_{\mathrm{sign}} = -2.32$ ($p = 0.02$):
significant directional reversal.  A large positive return three days ago
mildly but significantly predicts a negative return today, independently
of what happened on the two intervening days.  This lag-3 direction signal
is exactly what the partial price adjustment mechanism predicts: an
initial market move is partially reversed over several trading sessions
as the information is fully incorporated and any overreaction corrects.
The magnitude of this lag-3 directional effect is small - far too small
to show up in $\hatrho(3)$, which is swamped by magnitude noise - but the
FRI isolates the directional component cleanly.

\paragraph{Economic interpretation for practitioners}
The two-test comparison at each lag maps directly to tradeable strategies.
At lag~1: a simple contrarian signal (``short after up, long after down'')
has $p = 0.11$ - not statistically warranted.  A volatility signal
(``position smaller after a large move'') has $p < 10^{-12}$ - extremely
well warranted.  At lag~3: a contrarian direction signal has $p = 0.02$ - 
warranted at $5\%$, though the effect size ($\hatrho_{\mathrm{sign}}(3) \approx -0.02$)
is small.  The net expected return from exploiting the lag-3 directional
signal is approximately $\sigma \times 0.02 \approx 0.02\%$ per trade for
SPY, or about $20$ basis points annualised - marginally above transaction
costs for institutional investors.

\subsection{Two-channel analysis: sign and magnitude variance ratios}
\label{sec:channels}

Table~\ref{tab:channels} reports $\VR_2(20)$ and $\VR_4(20)$ for all
six instruments.

\begin{table}[ht]
\centering
\caption{Sign ($k=2$) vs.\ magnitude ($k=4$) variance ratios at $q=20$
(one calendar month).  Channels frequently disagree, confirming their
non-redundancy.  Microstructure interpretations follow from
Table~\ref{tab:predictions}}
\label{tab:channels}
\small
\begin{tabular}{@{} l c c p{6.5cm} @{}}
\toprule
Ticker & $\VR_2(20)$ & $\VR_4(20)$ & Pattern and interpretation \\
\midrule
SPY  & 0.990 & 1.005
     & Sign neutral; magnitude slightly persistent: GARCH, not direction \\[2pt]
QQQ  & 1.185 & 1.069
     & Sign momentum $+$ magnitude clustering: trend plus volatility \\[2pt]
AAPL & 1.081 & 0.994
     & Sign momentum; magnitude balanced: purely directional trend \\[2pt]
MSFT & 0.985 & 1.019
     & Sign neutral; mild magnitude: bounce or volatility clustering \\[2pt]
IWM  & 0.967 & 0.976
     & Both channels revert: non-synchronous effects persist at $q=20$ \\[2pt]
GLD  & 0.952 & 0.960
     & Both mildly below 1: consistent with random walk plus noise \\
\bottomrule
\end{tabular}
\normalsize
\end{table}

The channel disaggregation reveals qualitatively distinct market structures
that a scalar VR would conflate.  SPY at one-month horizon shows
$\VR_2 \approx 1$ (no direction signal) alongside $\VR_4 \approx 1$
(no magnitude signal) - the bounce and non-synchronous effects that drive
short-horizon reversal have averaged out by $q=20$.  IWM, by contrast,
shows both channels below one even at the monthly horizon ($\VR_2=0.967$,
$\VR_4=0.976$), consistent with non-synchronous constituent effects in
2\,000 small-cap stocks that persist over multiple weeks.

The most striking contrast is QQQ: strong sign momentum ($\VR_2=1.185$)
accompanied by weaker magnitude clustering ($\VR_4=1.069$).  At a monthly
horizon, the Nasdaq-100 consistently follows through on directional moves - 
a pattern consistent with institutional momentum strategies and persistent
information incorporation in technology stocks.  AAPL takes this further:
$\VR_2(20) = 1.081$ but $\VR_4(20) = 0.994$ - sign momentum with a
magnitude channel that is essentially flat.  This is the cleanest example
of a pure directional trend: positive days are followed by positive days,
but the size of moves does not cluster.

At the long horizon $q=60$ (approximately one quarter), the channel
divergence widens dramatically.  QQQ: $\VR_2(60) = 1.74$,
$\VR_4(60) = 1.41$ - strong sign momentum with additional magnitude
clustering.  AAPL: $\VR_2(60) = 1.31$, $\VR_4(60) = 1.10$ - direction
substantially exceeds magnitude.  In both cases, the direction channel
dominates at quarterly horizons: the trend-following component of
technology-sector returns is both statistically significant and
economically dominant over volatility clustering at this time scale.

\subsection{Subsample persistence of autocorrelation}
\label{sec:gn}

Table~\ref{tab:gn} reports the maximum autocorrelation $G_N$, the
subsample power-law decay exponent $\alpha$, and the half-period ratio
$R_N = G_{16384}/G_{32768}$.

\begin{table}[ht]
\centering
\caption{Subsample persistence diagnostic (lag bound $M=64$).
IID benchmark: $\alpha \approx 0.5$, $R_N \to \sqrt{2} \approx 1.41$.
Low $\alpha$ and $R_N \approx 1$ identify structural autocorrelation}
\label{tab:gn}
\small
\begin{tabular}{l c c c}
\toprule
Ticker & $G_{\max} = \max_m|\hatrho(m)|$ & Decay exponent $\alpha$ & $R_N$ \\
\midrule
SPY  & 0.081 & 0.15 & 1.13 \\
QQQ  & 0.057 & 0.22 & 1.30 \\
AAPL & 0.043 & 0.33 & 1.26 \\
MSFT & 0.054 & 0.37 & 0.91 \\
IWM  & 0.067 & 0.27 & 0.82 \\
GLD  & 0.033 & 0.56 & 1.71 \\
\bottomrule
\end{tabular}
\normalsize
\end{table}

For equity indices, $G_N$ decays far more slowly than the IID rate of
$N^{-1/2}$.  SPY's exponent of $\alpha = 0.15$ means $G_N$ halves only
when the sample grows by a factor of $2^{1/0.15} \approx 100$, not the
factor of $4$ that IID noise would predict.  The half-period ratios
$R_N \in \{0.82, 1.13, 1.26, 1.30\}$ for the four equity instruments are
all well below the IID benchmark $\sqrt{2} \approx 1.41$, confirming via
Proposition~\ref{prop:rn} that the autocorrelation signal persists as more
data are added.

GLD stands alone: $\alpha = 0.56 \approx 1/2$ and $R_N = 1.71 > \sqrt{2}$.
The maximum autocorrelation for gold decays \emph{faster} than IID noise
would predict, suggesting that even the small apparent correlations in GLD's
sample ACF are likely sampling artefacts.  This is fully consistent with
GLD's failure to jointly reject the random-walk null in Table~\ref{tab:vr-full}.

The practical implication for trading is direct.  A strategy calibrated on
equity autocorrelations is likely to encounter the same signal in the next
sample of similar length (structural, $R_N \approx 1$); a strategy on GLD
autocorrelations is likely to encounter a different signal or none at all.

%% ============================================================
\section{Subperiod Stability and Frequency Robustness}
\label{sec:subperiod}
%% ============================================================

\subsection{SPY across four market regimes}

\begin{center}
\small
\begin{tabular}{l r r r r}
\toprule
Period & $N$ & $\VR(2)$ & $\VR(10)$ & $\VR(60)$ \\
\midrule
1993--1999 (pre-bubble bull)  & 1\,747 & 0.944 & 0.709 & 0.562 \\
2000--2009 (bust and crisis)  & 2\,513 & 0.928 & 0.728 & 0.694 \\
2010--2019 (expansion)        & 2\,514 & 0.959 & 0.796 & 0.548 \\
2020--present (COVID and post) & 1\,623 & 0.851 & 0.815 & 0.547 \\
\bottomrule
\end{tabular}
\end{center}

Three findings emerge that speak directly to competing microstructure
explanations.

First, $\VR(2) < 1$ in \emph{every} decade: short-horizon mean reversion
is the single most stable empirical feature of SPY returns, present through
bull markets, bear markets, financial crises, and pandemics.  This stability
argues against interpretations tied to any particular episode.

Second, the 2020--present period shows the strongest short-horizon reversal
($\VR(2) = 0.851$, $z^* = -5.99$).  During the COVID-19 liquidity
crisis of March 2020, bid-ask spreads in SPY temporarily widened by a
factor of $5$--$10$ from their usual 1--3 basis points, and many ETF
constituents became genuinely difficult to trade.  The non-synchronous
effect was therefore at its maximum.  This episode inflates both the
magnitude of the bounce (wider spreads generate larger autocorrelation
per the Roll formula) and the non-synchronous effect (illiquid constituents
in market stress become more stale).

Third, $\VR(60)$ is lowest in the calm 2010--2019 period ($0.548$), the
era of persistently low volatility and historically tight bid-ask spreads
following the post-crisis microstructure improvements.  This is consistent
with VR mean reversion being partly driven by microstructure frictions that
are smaller when markets are calm and liquid.

\paragraph{Market structure discontinuities}
Two structural breaks in US equity microstructure deserve special attention.
NYSE \emph{decimalization} in January 2001 compressed tick sizes from
$\tfrac{1}{16}$ of a dollar to $\$0.01$, dramatically reducing bid-ask
spreads and therefore the Roll bounce component of autocorrelation.
Comparing the pre-2001 period (1993--2000) to the post-decimalization
period (2001 onward) would expect a visible reduction in the bounce
component - though the non-synchronous effect may have changed little.
The emergence of \emph{high-frequency trading (HFT)} from approximately
2007 onward further altered the microstructure: HFT market makers provide
extremely tight spreads but may also introduce new short-term autocorrelation
patterns through algorithmic quote updates.  The 2010--2019 period, which
encompasses the HFT era, showing \emph{weaker} short-horizon reversal
($\VR(2) = 0.959$) than earlier periods, may reflect the tighter spreads
introduced by HFT market makers - consistent with the Roll model where
smaller spreads generate weaker bounce.

\paragraph{AAPL's momentum episode}
Apple's full-sample near-random-walk ($\VR(2) = 0.979$, no joint rejection)
masks dramatic subperiod heterogeneity.  During 2000--2009, as Apple
launched the iPod (2001), iTunes (2003), and iPhone (2007),
$\VR(60) = 1.24$ - significant long-horizon momentum as institutional
investors systematically underestimated the product cycle.  By
2020--present, $\VR(60) = 0.73$: Apple, now the world's most capitalised
company with near-daily coverage from hundreds of analysts, exhibits mean
reversion rather than momentum.  The information environment has changed
so fundamentally that the directional trend-following of the 2000s is no
longer exploitable.

\subsection{Weekly returns: isolating microstructure from fundamentals}
\label{sec:weekly}

Switching from daily to weekly returns is a natural test for microstructure
explanations.  Bid-ask bounce resolves within one day; non-synchronous
trading affects daily index prices but disappears over a week since almost
all Russell~2000 constituents transact at least once per week.  If these
mechanisms dominate, the daily-to-weekly attenuation should be large.

\begin{center}
\small
\begin{tabular}{l c c}
\toprule
Ticker & $\VR(5)_{\text{weekly}}$ & $\VR(20)_{\text{weekly}}$ \\
\midrule
SPY  & 0.904 & 0.840 \\
QQQ  & 0.975 & 1.073 \\
AAPL & 1.080 & 1.127 \\
MSFT & 0.920 & 0.786 \\
IWM  & 0.938 & 0.830 \\
GLD  & 0.883 & 0.791 \\
\bottomrule
\end{tabular}
\end{center}

For SPY, the attenuation is clear: $\VR_{\mathrm{daily}}(5) = 0.837$
becomes $\VR_{\mathrm{weekly}}(5) = 0.904$ - the short-horizon effect
shrinks by about $40\%$ at weekly frequency, consistent with the
microstructure explanation.  Yet a meaningful deviation persists: the weekly
$\VR$ is still $9.6\%$ below unity, suggesting a structural, non-microstructure
component of mean reversion operating at multi-week horizons.  This is
precisely the partial-adjustment effect discussed in
Section~\ref{sec:partial-theory}: if price adjustment to news takes two
to five trading days, the directional reversal identified at lag~3 of the
daily FRI test would manifest as a below-unity weekly VR.

The AAPL and QQQ weekly results confirm the channel picture.
AAPL's weekly $\VR(20) = 1.127$ shows that the technology momentum
is not merely a daily microstructure artefact but a genuine multi-week
directional phenomenon.  QQQ's weekly $\VR(5) = 0.975$ - much closer
to one than $\VR_{\mathrm{daily}}(5) = 0.850$ - suggests that its
short-horizon reversal at daily frequency is predominantly microstructure
driven, while the long-horizon momentum ($\VR_{\mathrm{weekly}}(20) = 1.073$)
is structural.

%% ============================================================
\section{Cross-Asset Evidence}
\label{sec:crossasset}
%% ============================================================

The cross-asset panel tests whether the microstructure mechanisms identified
in equity markets generalise to other asset classes.  If the FRI
channel predictions in Table~\ref{tab:predictions} are correct, the
pattern of rejections should follow from each class's market structure.

\begin{table}[ht]
\centering
\caption{Variance-ratio evidence across seven asset classes
(Bonferroni joint test, $\alpha=5\%$, daily returns, 21 instruments).
Rej/N: number of Bonferroni rejections out of instruments with
sufficient history}
\label{tab:asset}
\footnotesize
\begin{tabular}{@{} l @{\hspace{6pt}} l @{\hspace{4pt}} c @{\hspace{6pt}} p{4.8cm} @{}}
\toprule
Asset class & Tickers & Rej/N & Dominant pattern \\
\midrule
US equity index
  & SPY, QQQ, DIA, IWM & 4/4 & Mean reversion; all $\widehat\VR<1$ \\[2pt]
US sector ETF
  & XLK, XLF, XLE, XLV, XLU & 2/5 & Mean reversion; all $\widehat\VR<1$ \\[2pt]
Intl equity
  & EFA, EEM, EWJ & 3/3 & Mean reversion; EEM strongest \\[2pt]
Fixed income
  & TLT, IEF, LQD, HYG & 0/4 & Treasuries revert; credit RW \\[2pt]
Commodities
  & GLD, SLV, USO, DBC & 0/4 & Random walk; oil slight momentum \\[2pt]
FX
  & UUP, FXE, FXY & 0/3 & USD/JPY weak reversal; EUR RW \\[2pt]
Cryptocurrency
  & BTC, ETH & 0/2 & Closest to random walk in panel \\
\bottomrule
\end{tabular}
\normalsize
\end{table}

\paragraph{Equities: the universal pattern}
Every equity instrument in the panel - 10 US names, 3 international - 
has $\widehat\VR(5)<1$.  The pattern holds across all geographies, sectors,
and market-cap segments.  Emerging-market equities (EEM) show the
strongest reversal ($\widehat\VR(5) = 0.799$, $z^*<-3$), consistent
with the non-synchronous prediction: EEM holds stocks across a dozen time
zones, and the fund's 4:00~p.m.~New York closing price reflects genuinely
stale prices for Asian constituents that closed $12$--$14$ hours earlier.
The result is robust to sector composition: even utilities (XLU, which has
low within-sector return correlation and lower HFT activity than financials
or technology) and energy (XLE, which has its own momentum cycles) show
$\VR<1$.

\paragraph{Fixed income: a clean structural split}
Treasury ETFs (TLT: $\widehat\VR(5) = 0.874$; IEF: $0.905$) show
significant mean reversion, while investment-grade (LQD) and high-yield
(HYG) credit ETFs are indistinguishable from random walks.  This split
provides a micro-structural ``natural experiment.''  Both Treasury and
credit ETFs are exchange-traded and composite instruments.  The key
difference is the underlying market structure.  US Treasuries are traded
on electronic platforms with active primary dealer participation; dealers
manage inventory and quote continuously, creating bounce and inventory
effects.  Investment-grade and high-yield corporate bonds, by contrast,
are traded over-the-counter: investors negotiate with dealers over the
phone or via electronic RFQ (request-for-quote) systems, prices are not
continuously quoted, and the ETF's net asset value (NAV) - the pricing
benchmark - is often struck using dealer quote matrices rather than
actual transactions.  The OTC nature of corporate bond pricing eliminates
the non-synchronous staleness and bounce effects that drive equity and
Treasury mean reversion.

\paragraph{Commodities: structural differences matter}
Gold (GLD), silver (SLV), and the broad commodity index (DBC) are
near random walks.  Unlike equity ETFs, commodity ETFs typically hold
futures contracts rather than the physical commodity or a basket of
individual securities; futures are traded on centralised exchanges with
continuous, limit-order-book mechanisms, not dealer quotes, eliminating
most inventory and bounce effects.  Crude oil (USO) shows mild momentum
($\widehat\VR(5)=1.040$), potentially reflecting carry dynamics and
storage-cost effects in the crude oil futures market rather than
microstructure effects.

\paragraph{FX: partial microstructure}
The dollar index (UUP) and yen (FXY) show weak mean reversion; the euro
(FXE) is a random walk.  The FX market is an OTC dealer market, so the
bid-ask bounce mechanism applies in principle.  However, the forex market
operates nearly continuously (24 hours, 5 days), the dealer network is
global, and spreads are extremely tight for major pairs - the roll-implied
spread for EUR/USD from any observed daily autocorrelation would be
negligibly small.  The partial reversion in USD and JPY likely reflects
central bank intervention dynamics rather than structural microstructure.
The random-walk behaviour of EUR/USD is consistent with it being the
world's most liquid currency pair.

\paragraph{Cryptocurrency: the cleanest random walk}
Bitcoin (BTC) and Ether (ETH) are the closest to a pure random walk
in the entire panel.  No Bonferroni rejection; no significant VR
deviation at any horizon; and the subsample diagnostic (not tabulated)
yields $R_N$ near $\sqrt{2}$, consistent with transient noise rather
than structural dependence.

This finding is, from the FRI channel perspective, the expected outcome.
Cryptocurrency markets have none of the features that generate microstructure
mean reversion: there is no composition from non-synchronously priced
constituents (BTC is a single asset); there is no exchange close (24/7
trading eliminates overnight-gap staleness effects); and there is no
specialist or primary dealer who manages inventory and smooths prices.
Market making in crypto is done algorithmically by competitive high-frequency
market makers who adjust quotes in real time; the resulting price process
is much closer to the efficient-market benchmark.  The absence of mean
reversion in BTC despite its extreme volatility ($\sigma_{\mathrm{daily}}
\approx 3$--$4\%$) further confirms the channel interpretation: high
volatility alone does not create autocorrelation; the institutional
market structure does.

%% ============================================================
\section{Discussion}
\label{sec:discussion}
%% ============================================================

\subsection{Microstructure interpretation of the full results}

Taken together, the FRI results deliver a detailed microstructure diagnosis
of US equity daily returns.  At the shortest horizon ($m=1$), the data
are consistent with a mixture of bid-ask bounce and non-synchronous
constituent staleness, both of which are \emph{magnitude} mechanisms.
The Roll implied half-spread for SPY ($\hat{s} \approx 28$ bps) vastly
exceeds the actual effective spread ($\approx 2$ bps), indicating that the
bounce contributes at most $\approx 7\%$ of the observed autocorrelation;
the remaining $\approx 93\%$ must be attributed to non-synchronous
staleness across the 503 S\&P~500 constituents.  This matches the
\citet{LoMacKinlay1990} analytical result closely: for a composite with
$\approx 5\%$ of constituents not trading on a given day, the induced
autocorrelation is in the range of $0.05$--$0.10$.

At lag~3, a separate directional mechanism activates.  The FRI sign
test detects reversal ($p=0.02$) that is invisible to the scalar ACF.
This directional correction is consistent with the partial-adjustment
model of \citet{GlostenMilgrom1985}: over two to three days, the market
incorporates information from order flow, and any initial overreaction
generates a mild correction.  The attenuation from daily to weekly
frequency - the directional component survives in weekly $\VR$ while the
magnitude component substantially weakens - further supports the
multi-day adjustment interpretation.

The cross-asset results complete the picture.  Markets with non-synchronous
constituent pricing (all equity ETFs) and dealer inventory smoothing
(Treasuries) show mean reversion.  Markets without these features
(credit ETFs with OTC NAV pricing, commodity futures, FX, crypto) show
random walk.  The FRI channel framework correctly classifies every asset
class based on its market structure, without appeal to behavioural or
risk-premium explanations.

\subsection{Structural persistence and out-of-sample reliability}

The half-period ratio $R_N \approx 1$ for equity indices establishes that
the detected autocorrelations are structural properties of the data-generating
process.  A practitioner faces a different question, however: will the
signal be large enough to trade profitably after costs?

For SPY at lag~1, the autocorrelation $\hatrho(1) = -0.081$ implies a
daily expected return of approximately $-0.081\sigma \approx -0.08\%$
conditional on a $1\%$ move, or about $-8$ basis points per $1\%$ move.
Round-trip transaction costs for a large institutional investor in SPY
are perhaps $2$--$5$ basis points (market impact plus half-spread).
The implied pre-cost Sharpe ratio for a daily contrarian strategy is
roughly $|\hatrho(1)|/\sqrt{1-\hatrho(1)^2} \approx 0.08$ per trade,
or $0.08\sqrt{252} \approx 1.3$ annualised if fully deployed - seemingly
attractive.  However, the crucial caveat is that the magnitude channel,
not the direction channel, drives this autocorrelation.  A direction-based
contrarian strategy (long after down days, short after up days) does not
have statistical warrant ($p=0.11$).  Only a magnitude-based strategy
(position size inversely proportional to the square of yesterday's return
size) has an empirical basis.

The lag-3 directional signal ($p=0.02$, $\hatrho_{\mathrm{sign}}(3) \approx -0.02$)
generates an expected directional edge of approximately $\sigma \times 0.02
\approx 0.02\%$ per trade.  After transaction costs, this is unlikely to
be exploitable by most investors.  Its value is diagnostic rather than
strategic: it reveals a partial-adjustment mechanism that operates over
three trading days, which is useful for understanding market structure even
if it does not yield tradeable alpha.

At the longer horizons where QQQ and AAPL show significant \emph{momentum}
($\VR_2(60)>1$), the economics are different.  Sign momentum at a quarterly
horizon ($\VR_2(60) = 1.74$ for QQQ) corresponds to a directional
persistence that has been exploitable by systematic trend-following
strategies over long holding periods.  The FRI sign channel confirms that
this is genuine directional momentum, not volatility clustering - important
for strategies that need to distinguish the two.

\subsection{Market structure evolution and VR dynamics}

The subperiod analysis suggests that the magnitude of VR deviations changes
with market structure.  Two structural shifts are worth noting:
\begin{itemize}[leftmargin=1.8em, itemsep=4pt]
  \item \textbf{Decimalization (January 2001)} compressed NYSE tick sizes
  from $\tfrac{1}{16}$ dollar to $\$0.01$, reducing average bid-ask spreads
  by $50$--$75\%$.  Under the Roll model, narrower spreads directly reduce
  $|\hatrho(1)|$.  The 2000--2009 $\VR(2) = 0.928$ versus
  1993--1999 $\VR(2) = 0.944$ is consistent with this effect - though
  the financial crisis of 2008 introduces a confound by temporarily widening
  spreads.  A post-2001 vs. pre-2001 comparison controlling for volatility
  would be needed to cleanly identify the decimalization effect.
  \item \textbf{HFT proliferation (post-2007)} created a new class of
  market makers providing extremely tight spreads ($1$--$2$ bps in SPY)
  but also potentially introducing new microstructure patterns through
  rapid, algorithmic quote updates.  The low 2010--2019 $\VR(2) = 0.959$
  is consistent with HFT market making compressing the bounce; the
  structural persistence diagnostic $R_N \approx 1$ nevertheless confirms
  that the residual mean reversion is genuine.
\end{itemize}
A rigorous event-study of these breaks would require high-frequency TAQ data
and is beyond the scope of this paper, but the subperiod patterns are
qualitatively consistent with the microstructure channel interpretation.

\subsection{Limitations and future directions}

\begin{itemize}[leftmargin=1.8em, itemsep=4pt]
  \item \emph{End-of-day data}
  All analysis uses daily closing prices.  TAQ-level (trade-and-quote)
  data would allow: direct measurement of the Roll half-spread via
  serial covariance of transaction prices; intraday FRI decomposition
  to identify at which hour of the trading day the bounce, non-synchronous
  staleness, and partial-adjustment effects operate; and cleaner attribution
  of the lag-1 autocorrelation across the three mechanisms.
  \item \emph{$k=4$ bucket choice}
  The magnitude channel uses the sample median of $|r_t|$ as the
  size cut-point.  A time-varying threshold (e.g., a 60-day rolling
  median, or a GARCH-model-implied percentile) would allow the channel
  to adapt to changing volatility regimes, which may be important in
  subperiod analysis.
  \item \emph{Multiple-testing correction}
  Bonferroni correction is applied within horizons but not across the
  (ticker, $q$) product space.  For the 21-instrument cross-asset panel,
  false-discovery-rate procedures \citep{BenjaminiHochberg1995} would
  provide more powerful corrections than Bonferroni.
  \item \emph{Overlapping observations}
  Variance ratio tests based on overlapping $q$-period returns can exhibit
  size distortions in small samples.  While \citet{LoMacKinlay1989}
  show the $z^*$ statistic remains reasonably calibrated, bootstrap or
  sub-sampling procedures may improve inference in subperiods with $n<1{,}000$.
  \item \emph{Structural break testing}
  The subperiod analysis is informal.  Formal tests for instability of
  $\VR(q)$ - for instance, a sup-Wald test over rolling windows - would
  identify the exact timing of regime shifts and could be used to test
  the decimalization and HFT hypotheses directly.
\end{itemize}

%% ============================================================
\section{Conclusion}
\label{sec:conclusion}
%% ============================================================

We developed and applied the Fej\'{e}r-FRI framework to 33 years of US
equity and cross-asset daily data, obtaining a precise decomposition of
return predictability into its sign and magnitude components.
The resulting empirical picture is richer and more specific than what
the scalar variance-ratio test can deliver, and maps closely onto the
predictions of market microstructure theory.

Our principal findings organise into three levels.

\medskip\noindent
\textbf{Mechanism level}
The FRI decomposition confirms that bid-ask bounce and non-synchronous
constituent staleness dominate at lag~1 for US equity indices: the
lag-1 autocorrelation is entirely in the \emph{magnitude} channel
($\hatrho(1) = -0.081$, $p<10^{-12}$; sign channel $p=0.11$).
A back-of-envelope Roll decomposition confirms that the non-synchronous
effect accounts for at least $90\%$ of the observed SPY autocorrelation,
with the bounce contributing only a small residual.  A separate, slower
directional partial-adjustment channel is cleanly identified at lag~3
($p^*_{\mathrm{sign}}=0.02$), invisible to the scalar autocorrelation
test.

\medskip\noindent
\textbf{Instrument level}
Short-horizon mean reversion is present in all equity indices, with
the non-synchronous effect increasing predictably with the number
and illiquidity of constituents (IWM $\ge$ SPY $>$ MSFT $\approx$ AAPL).
At long horizons, technology-name QQQ and AAPL exhibit significant
sign \emph{momentum} with $\VR_2(60) > \VR_4(60)$, identifying
directional trend as the dominant force at quarterly horizons - a pattern
consistent with sustained institutional momentum strategies and the
sequential information incorporation predicted by adverse-selection models.

\medskip\noindent
\textbf{Asset-class level}
Short-horizon mean reversion is confined to exchange-traded composite
markets (equity ETFs, Treasuries) and absent in OTC credit, commodity
futures, continuously traded FX, and decentralised cryptocurrency - 
precisely the asset classes that lack the non-synchronous constituent
structure and specialist/dealer inventory dynamics that generate the
magnitude channel reversal.  Cryptocurrency provides the cleanest random
walk in the panel despite its extreme volatility, confirming that
24/7 trading and the absence of composite pricing eliminate the
microstructure sources of autocorrelation.

\paragraph{Future directions}
The FRI framework opens several concrete research programmes.
First, applying the decomposition to TAQ data at the intraday frequency
would permit direct identification of the time scale at which each
mechanism operates and would allow clean separation of the bounce from
the non-synchronous effect using trade-direction data.  Second, a
time-varying FRI analysis - using rolling estimation windows and a formal
structural-break test - could precisely date the impact of decimalization
(2001) and HFT proliferation (2007--2010) on the sign and magnitude
channels separately.  Third, extending the FRI to $k>4$ would allow finer
quantile-based magnitude decompositions, potentially isolating tail events
from ordinary daily moves.  Fourth, connecting the estimated channel VRs
to structural models of market making - where the bid-ask spread and
partial-adjustment speed are jointly determined in equilibrium - would place
the entire empirical framework in a theoretically grounded pricing context.

%% ── Bibliography ─────────────────────────────────────────────
\bibliographystyle{plainnat}

\end{document}